\documentclass[10pt,doublecolumn]{IEEEtran}
 \pdfoutput=1
\normalsize
\usepackage{amsthm,amssymb,amsmath,tikz,graphics,cite,float,graphicx,epstopdf,epsfig,verbatim,url}
\usetikzlibrary{automata}
\usetikzlibrary{shapes,arrows}
\newtheorem{lem}{Lemma}
\newtheorem{corol}{Corollary}
\usepackage[utf8]{inputenc}
\newtheorem{thm}{Theorem}

\newtheorem{rem}{Remark}

\newtheorem{mydef}{Definition}

\epstopdfsetup{prefersuffix=true}
\usepackage{capt-of}
\usepackage[noend]{algpseudocode}
\usepackage{algorithmicx}
\usepackage[ruled]{algorithm}

\tikzset{
  treenode/.style = {align=center, inner sep=0pt, text centered,
    font=\sffamily},
  arn_r/.style = {treenode, circle, black, draw=black, 
    text width=1.5em, very thick},
}
 
\begin{document}
\title{Capacity of Cellular Networks with Femtocache}
\author{Mohsen~Karimzadeh~Kiskani$^{\dag}$,
        and Hamid~R.~Sadjadpour$^{\dag}$,
\thanks{M. K. Kiskani$^{\dag}$ and H. R. Sadjadpour$^{\dag}$ 
are with the Department of Electrical Engineering, University of California, Santa Cruz. Email: 
\{mohsen, hamid\}@soe.ucsc.edu}}
\maketitle \thispagestyle{empty}
\begin{abstract}
The capacity of next generation of cellular networks using femtocaches is studied when multihop communications and decentralized cache placement are considered. 
We show that 
the storage capability of future network User Terminals (UT) can be effectively used to increase the capacity 
in random decentralized uncoded caching. We further propose a random decentralized coded caching scheme 
which achieves higher capacity results than the random decentralized uncoded caching. The result shows that coded 
caching which is suitable for systems with limited storage capabilities can   improve the capacity of 
cellular networks by a factor of $\log (n)$ where $n$ is the number of nodes served by the femtocache.
\end{abstract}

\begin{IEEEkeywords}
Cellular Networks, Caching, 5G Networks
\end{IEEEkeywords}

\IEEEpeerreviewmaketitle

\section{Introduction}
Future cellular networks require the support for high data rate video and content delivery.
Many  researchers have recently  focused on 
proposing robust solutions to efficiently address the bandwidth utilization problem. 
For example, the authors in \cite{chandrasekhar2008femtocell} proposed to create 
home sized femtocells to overcome this issue. 

Golrezaei et. al \cite{golrezaei2012femtocaching} proposed an alternate solution by introducing the concept of 
femtocaching. In their solution, several {\em helper} nodes with high storage capabilites 
are deployed in each cell to create a distributed wireless caching infrastructure.
These nodes will reduce the communication burden on the base station by satisfying many of 
the  {\em User Terminal (UT)} requests using the stored contents in their caches. Therefore, the storage capability 
of helper nodes is used to increase the overall network capacity.

Currently, many researchers recommend to utilize  high bandwidth Device-to-Device (D2D)
and Machine-to-Machine (M2M) communication capabilities for UTs. Current IEEE 802.11ad standard \cite{ieee80211ad} and 
the millimeter-wave proposal for future 5G networks 
\cite{boccardi2014five,rappaport2013millimeter}
are examples of such  high bandwidth D2D communications which can enable up to hundreds of GHz of bandwidth.
Authors in \cite{kiskani2015multihop} suggest to use 
this abundant bandwidth to deliver the contents from the helper nodes to the UTs through multihop communications. 
Therefore, they extend the solution in \cite{golrezaei2012femtocaching} to allow   
multihop communication between the helper and the UTs. This approach can significantly reduce network deployment
and maintenance costs without imposing  restrictions on content delivery.

On the other hand, multihop communication between the helper node and the UTs 
together with the use of  UTs' storage capabilities  can improve the overall network capacity. 
Current  improvements on the storage capacity of mobile devices show that future 
UTs will have considerable under-utilized storage capabilities which can be effectively used to improve the network
content delivery. Utilizing the storage capability of UTs allows future cellular networks to move toward a  
distributed D2D wireless caching network without imposing significant communication burden on the base station.

In this paper, we consider a wireless cellular network in which several helper nodes are deployed throughout 
the network to create a wireless distributed caching infrastructure. Each helper is serving a
wireless ad hoc network of UTs through multihop communications as proposed in 
\cite{kiskani2015multihop}. 
We assume that helpers are connected to the base station through a high bandwidth backhaul  infrastructure and  have 
access to all contents. They will use multihop communications to deliver the contents to the UTs. We assume 
that the UTs also use their under-utilized storage capacity to improve network content delivery. 
We will compute the capacity of such networks under decentralized random
coded and uncoded cache placement algorithms. 

In decentralized cache placement algorithms, each UT's cache is populated independently of other UTs. 
In a random decentralized uncoded cache placement algorithms, contents are chosen randomly and stored in UTs cache locations. 
However, in a random decentralized coded cache placement algorithm, each UT stores a combination of multiple contents in its cache.
The UTs will follow this process until 
their caches are fully populated. Coded cache placement is of interest in systems when the storage capacity of each node is 
limited compared to the total number of contents in the network.

This paper computes the capacity of cellular networks with multihop communications using helper  and relay nodes for both uncoded and coded random decentralized cache placement algorithm. Our prior work \cite{kiskani2015multihop} focused on multihop communications with helper nodes but without using the contents stored by the relay nodes. In this paper, the requests can be satisfied either by the helper node or a relay on the path between requesting node and the helper. 
As far as we know, this is the first paper 
to prove that coded caching which is originally motivated by the  lack of sufficient storage capacity in UTs \cite{lee2015index}
can also increase the network capacity.

The rest of the paper is organized as follows. In section \ref{relwork}, the related work is discussed and  
section \ref{netmod} describes the network model considered in this paper. Section \ref{uncoded_sec} focuses on the capacity computation of 
wireless cellular networks operating under a decentralized random 
uncoded cache placement algorithm and  section \ref{coded_sec} reports the capacity  for a 
random coded cache placement algorithm. Simulation results are reported in 
section \ref{sim_sec} and the paper is concluded in 
section \ref{conc_sec}. 

\section{Related Work}
\label{relwork}
The femtocaching network model is proposed in 
\cite{golrezaei2012femtocaching, shanmugam2013femtocaching} and the capacity improvement for single-hop communication is computed. 
In \cite{kiskani2015multihop}, the authors considered a femtocaching D2D network with multihop relaying of information from the helper to the UTs. They proposed a solution based on index coding in which the helper is utilizing the side information in the UTs to create index codes which are to be multicasted to the UTs. This way, they reduce bandwidth utilization by grouping multiple unicast transmissions into  multicast transmission. However, that paper  does not consider the case of coded side information and also it assumes that the relayed message from the helper cannot be changed based on the information in the relaying UTs.

Caching has  been a subject of recent interest to many researchers. The fundamental limits of caching is studied 
in \cite{maddah2014fundamental}. The results in \cite{maddah2014fundamental} has been extended to include 
decentralized  coded caching strategies in \cite{maddah2013decentralized,pedarsani2014online,hachem2014multi,
karamchandani2014hierarchical}.
Other researchers studied the problem of caching in wireless and D2D networks. Among them are the works of 
authors in \cite{ji2014fundamental,ji2013wireless,jeon2015wireless}. 
The authors in \cite{jeon2015wireless} have studied the capacity of wireless 
D2D networks with caching in certain regimes. Our work is essentially different from all of these works in 
the sense that the UT always request the content from helper (femtocache) while in these papers, a wireless ad hoc network is 
considered where UTs' requests can be satisfied by any of the nodes in the network. Clearly, such network model requires significant overhead to locate the nearest UT with the requested content while in our approach, the request always is sent toward the helper.

Coded caching has been 
previously suggested \cite{lee2015index,chen2014fundamental} as an efficient caching technique for devices with small storage capacity. Our results demonstrate that apart from the practical importance of coded caching in small storage 
systems, it can be  useful in increasing the capacity of cached networks. 

\section{Network Model}
\label{netmod}
In this paper we will study the capacity of cellular networks utilizing a distributed femtocaching 
infrastructure as proposed in \cite{golrezaei2012femtocaching}. In these networks, it is assumed that 
several helpers with high storage capacity are deployed throughout the network to assist  
the base station in delivering the contents to the UTs. The UTs can receive contents from helpers using D2D communications 
through either single hop \cite{golrezaei2013femtocaching} or multiple hops \cite{kiskani2015multihop}.

Assume that a helper is serving a D2D network of $n$ nodes. To analyze the capacity of this network, we will use the deterministic
routing 
approach proposed in \cite{kulkarni2004deterministic}. Without loss of generality, it is assumed that the UTs are distributed on a 
square of area 
one and the helper is located at the center of the square as shown in Figure \ref{fig_model}.
\begin{figure}
    \center
      \includegraphics[scale=0.2,angle=0]{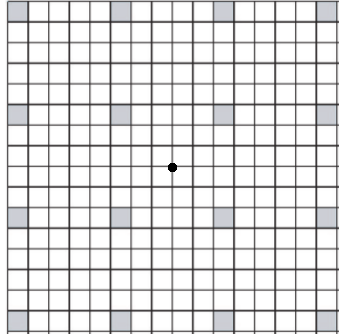}
\caption{The helper node which is located at the center of the unit square is serving $n$
UTs which are  randomly  distributed on a unit square. The square is divided into 
$\Theta(\frac{n}{\log n})$ square-lets of area $\Theta(\frac{\log n}{n})$. 
Gray square-lets can transmit simultaneously. Around each grey square-lets
there is a ``silence'' region of square-lets that are not allowed to transmit
in the given time slot.}
\label{fig_model}
\end{figure}

When a UT  requests a content from 
the helper, the content is routed from the helper to the UT in a sequence of horizontal and vertical
square-lets that are crossing the straight line which 
connects the helper to the UT. 
It is proved in \cite{kulkarni2004deterministic} that if 
the UTs are uniformly distributed over the unit square area and  
the area is divided into $\Theta(\frac{n}{\log n})$ 
square-lets each with area $\Theta(\frac{\log n}{n})$, then with a probability close to one, each square-let contains 
$\Theta(\log n)$ UTs. 
A minimum transmission range of $s(n) = \Theta(\sqrt{\frac{\log n}{n}})$ ensures network 
connectivity \cite{penrose1997longest} in such a dense network. Therefore, assuming a transmission range of $s(n) = \Theta(\sqrt{\frac{\log n}{n}})$, the 
proposed routing algorithm is proved to converge and all the UTs will be able to receive their requested contents with probability one. 

To avoid multiple access interference, a
 {\em Protocol Model} is considered \cite{xue2006scaling} for the successful communication between UTs. 
According to this model, if the UT $i$ is placed at the coordinates $Y_i$, then a transmission from $i$ to another UT $j$ is
successful if $|Y_i-Y_j | < s(n)$ and for any other UT $k$ transmitting on the same frequency band, $|Y_k-Y_j| > (1 + \Delta)s(n)$
for a fixed guard zone factor $\Delta$. A Time Division Multiple Access (TDMA) scheme is assumed for the transmission between the 
square-lets. With the assumption of Protocol Model,
it was shown  \cite{kulkarni2004deterministic} that if the square-lets have a side length of $C_1 s(n)$ for a fixed 
constant $C_1$ and if the square-lets with a distance of $C_2=\frac{2+\Delta}{C_1}$ 
square-lets apart from each other transmit simultaneously,
then there will be no interference between the concurrent transmissions. 

Lets denote the data rate for each UT by $\lambda$, the number of hops between each UT and its helper by $x$, its average
value by $\mathbb{E}[x]$, and the total network throughput by $n \lambda$. 
Therefore, on average the network delivers $n \lambda \mathbb{E}[x]$ bits in a unit of time. There are exactly 
$\frac{1}{(C_2 C_1 s(n))^2}$ square-lets at any time slot available for transmission  and if the total network bandwidth is
$W$ which  is a constant  value independent of $n$, then the total number of bits that the network is capable of delivering is 
upper bounded \cite{azimdoost2013} by $\frac{W}{(C_2 C_1 s(n))^2}$. Hence,
\begin{equation}
\label{capa}
 \lambda \le \lambda_{\max}=\dfrac{W}{n  \mathbb{E}[x] (C_2 C_1 s(n))^2} = \Theta 
 \left(\dfrac{1}{\mathbb{E}[x] \log n} \right).
\end{equation}
This result implies that the maximum throughput can be derived by computing $\mathbb{E}[x]$. The capacity problem 
is therefore reduced to computing the average number of hops traveled between the UTs and the helper.

We assume  the number of contents in the network is $m$ which grows polynomially 
with $n$ \cite{jeon2015wireless} as $m=C_3 n^{\alpha}$. We denote the set of indices of all contents by $\xi=\{1,2,\dots,m\}$. 
Without loss of generality we assume that the contents with lower indices are more popular compared to the ones with higher indices. 
We further assume that the contents can be categorized into two groups of highly popular contents and less popular contents.
Let's denote the requested content by $r$, then the probability that $r$ belongs to the highly popular group of contents should 
be close to one. The highly popular and less popular groups can be defined as 
\begin{mydef}
{\em For  $\epsilon \in (0,1)$, define $h_{\epsilon}$ as the smallest integer such that if $\xi_{1-\epsilon} = \{1,2,\dots,h_{\epsilon}\}$ and 
$\xi_{\epsilon} = \{h_{\epsilon}+1,h_{\epsilon}+2,\dots,m\}$, then  $\textrm{P}[r \in \xi_{1-\epsilon}] \ge 1-\epsilon$.
} 
\end{mydef}
We refer to $\xi_{1-\epsilon}$ as the group of highly popular contents and $\xi_{\epsilon}$ as the group of 
less popular contents. 
We assume that the helper has access to all the contents in $\xi$ but the UTs are assumed to have a limited cache of size
$M=C_4 n^{\beta}$. For the purpose of this paper, we assume that all UTs have the same cache size and the helper (or base station) is applying a decentralized caching strategy  
to populate a UT cache independently of other UTs.  Since the UTs have limited cache size, we assume that only popular contents in $\xi_{1-\epsilon}$ are stored in UTs caches. 
Any request for the less 
popular contents from $\xi_{\epsilon}$ will be satisfied directly by the helper or base station. 

When a UT $i$ requests a content, if that specific content or a
group of coded contents which can be used to decode the content are available in the caches of the UTs in the routing 
path between the UT $i$ and the helper, then the helper informs the UTs which have the coded contents in their caches to 
send the content to UT $i$. If the content or a set of coded contents do not exist 
in the caches of the UTs between UT $i$ and helper, then the content is routed to UT $i$ from the helper through on average 
$s(n)^{-1} = \Theta( \sqrt{\frac{n}{ \log n}})$ hops. Since majority of the requests are from popular contents, these requests can be satisfied by the UTs instead of helper which reduces the average number of transmissions per request. 
Therefore, 
provided that the content request probability distribution is known, the average  number of traveled hops in the network 
can be written as 
 \begin{eqnarray}
 \mathbb{E}[x] &=& \mathbb{E}[x | r \in \xi_{\epsilon}] \textrm{P}[r \in \xi_{\epsilon}] +  \mathbb{E}[x | r \in \xi_{1-\epsilon}] \textrm{P}[r \in \xi_{1-\epsilon}], \nonumber \\
 &=& \epsilon \sqrt{\frac{ n}{\log n}} + (1-\epsilon) \mathbb{E}[x | r \in \xi_{1-\epsilon}].
 \label{ex112}
\end{eqnarray}
\begin{rem}{\em
  By choosing $\epsilon=\frac{1}{\sqrt{n}}$, 
  the average hop count of the contents in $\xi_{\epsilon}$ will become less than one and therefore  
  the total average hop count can be approximated by the average hop counts of the 
  files in $\xi_{1-\frac{1}{\sqrt{n}}}$, i.e., 
   \begin{eqnarray}
 \mathbb{E}[x] &=& \frac{1}{\sqrt{n}} \sqrt{\frac{ n}{\log n}} + 
 (1-\frac{1}{\sqrt{n}}) \mathbb{E}[x | r \in \xi_{1-\frac{1}{\sqrt{n}}}] \nonumber \\
 &=& \Theta( \mathbb{E}[x | r \in \xi_{1-\frac{1}{\sqrt{n}}}] )
 \label{ex1532}
\end{eqnarray}
 }\label{dmfbfj}
\end{rem}
For many web applications \cite{breslau1998implications,breslau1999web}, the  content request popularity follows Zipfian-like distributions. 
Although we will express our results in general form without any specific assumption,
we will later compute explicit capacity results assuming a Zipfian content popularity distribution. 
Our main results in proving the gain of coded caching over uncoded caching
is independent of  the  content popularity distribution. 

For a Zipfian content popularity distribution with parameter $s$, 
the probability of requesting a content with popularity index $i$ will have the 
form 
$ \textrm{P}[r = i] = \frac{i^{-s}}{\sum_{j=1}^m j^{-s}} = \frac{i^{-s}}{H_{m,s}},
$
where $H_{m,s}$ represents the generalized harmonic number with parameter $s$.
\begin{rem}{\em
 In case of Zipfian distribution with $s>1$,
when  few popular contents are widely 
requested by the UTs, we have 
 \begin{eqnarray}
  \textrm{P}[r \in \xi_{\epsilon}] = \sum_{i=h_{\epsilon}+1}^m \frac{i^{-s}}{H_{m,s}} \le
  \frac{(m-h_{\epsilon})(h_{\epsilon})^{-s}}{H_{m,s}}.
  \label{enbg3}
 \end{eqnarray}
 Assuming that $m=C_3 n^{\alpha}$ is a large number, $H_{m,s}$ converges to Reimann Zeta function $\zeta(s)$. Since 
 the number of popular contents is negligible compared to the total number of contents, $\frac{m-h_{\epsilon}}{H_{m,s}}$ can be upper bounded by 
 $\frac{2m}{\zeta(s)}$ and therefore in case of a Zipfian distribution with $s>1$, we have 
  \begin{eqnarray}
  \textrm{P}[r \in \xi_{\epsilon}] \le \frac{2C_3 n^{\alpha}(h_{\epsilon})^{-s}}{\zeta(s)}.
  \label{enbg33dsf}
 \end{eqnarray}
 In order to compute $ h_{\frac{1}{\sqrt{n}}} $ such that  $\textrm{P}[r \in \xi_{\frac{1}{\sqrt{n}}}] \le \frac{1}{\sqrt{n}}$, it is sufficient to have   
 \begin{equation}
  h_{\frac{1}{\sqrt{n}}} = \Theta \left(n^{\frac{1}{s}(\alpha + \frac{1}{2})} \right).
  \label{jdfdh}
 \end{equation}
 \label{derreol1}
 Since we implicitly assume that $h_{\frac{1}{\sqrt{n}}} = O(m) = O (C_3 n^{\alpha})$, equation \eqref{jdfdh} is valid when 
 $\alpha > \frac{1}{2(s-1)}$.
 }
\end{rem}
\begin{rem}{\em
 In case when $\beta > \frac{1}{s}(\alpha + \frac{1}{2})$, 
 the average number of traveled hops can be zero since in that case, all UTs 
 can store all the popular 
 contents in their caches. Therefore, in this case, the maximum per node capacity $\Theta(1)$ is trivially achievable.
 \label{mdgfb}
}\end{rem}
For the rest of paper, we compute capacity assuming that the number of popular contents $h_{\epsilon}$
is known. The capacity for the special case of Zipfian distribution will be derived as well. 

\section{Decentralized Uncoded Caching}
\label{uncoded_sec}
This section focuses on computing the capacity of cellular networks when UTs cache uncoded contents in a distributed fashion. It is assumed that the UTs only cache the most popular contents from $\xi_{1-\epsilon}$.
\begin{lem}
 {\em  If  a  content is drawn uniformly at random from the set of most popular contents in 
 $\xi_{1-\epsilon}$, then the average required number of requests to have at least one copy 
 of each content from $\xi_{1-\epsilon}$ is equal to 
 \begin{equation}
  \mathbb{E}[l] = h_{\epsilon} H_{h_{\epsilon}} = h_{\epsilon} \sum_{i=1}^{h_{\epsilon}} \frac{1}{i}
  = \Theta(h_{\epsilon} \log h_{\epsilon}),
  \label{codhf}
 \end{equation}
 where $H_{h_{\epsilon}}$ is the $h_{\epsilon}^{th}$ harmonic number. 
 This problem is similar to the well-known coupon collector problem. 
 }
 \label{leme0}
\end{lem}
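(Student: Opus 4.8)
The plan is to recognize $\mathbb{E}[l]$ as exactly the expected stopping time of the classical coupon collector process with $h_{\epsilon}$ coupon types, and to evaluate it by the standard phase decomposition. Since each request independently returns a content drawn uniformly from the $h_{\epsilon}$ contents of $\xi_{1-\epsilon}$, successive requests behave precisely like draws of coupons of $h_{\epsilon}$ equiprobable types, and $l$ is the first time at which every type has appeared.

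First I would introduce, for $i = 1, \dots, h_{\epsilon}$, the random variable $l_i$ counting the number of requests issued between the instant the $(i-1)$-st distinct content has been seen and the instant the $i$-th distinct content first appears, so that $l = \sum_{i=1}^{h_{\epsilon}} l_i$. Conditioned on the event that $i-1$ distinct contents have already been collected, any new request is ``fresh'' (a not-yet-seen content) with probability $p_i = \frac{h_{\epsilon}-(i-1)}{h_{\epsilon}}$, and this probability does not change during the $i$-th phase because the requests are i.i.d.\ and uniform. Hence $l_i$ is geometric with parameter $p_i$ and $\mathbb{E}[l_i] = 1/p_i = \frac{h_{\epsilon}}{h_{\epsilon}-i+1}$.

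Next, linearity of expectation together with the reindexing $j = h_{\epsilon}-i+1$ gives
\begin{equation*}
\mathbb{E}[l] = \sum_{i=1}^{h_{\epsilon}} \frac{h_{\epsilon}}{h_{\epsilon}-i+1} = h_{\epsilon}\sum_{j=1}^{h_{\epsilon}}\frac{1}{j} = h_{\epsilon} H_{h_{\epsilon}}.
\end{equation*}
Finally, invoking the textbook asymptotic $H_{h_{\epsilon}} = \ln h_{\epsilon} + \gamma + o(1) = \Theta(\log h_{\epsilon})$ yields $\mathbb{E}[l] = \Theta(h_{\epsilon}\log h_{\epsilon})$, which is the assertion.

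There is essentially no obstacle here; the only step deserving an explicit sentence is the claim that the waiting time within phase $i$ is genuinely geometric with a \emph{constant} success probability $p_i$ — this is where the uniform, independent nature of the requests is used. Everything else is linearity of expectation and a standard harmonic-number estimate.
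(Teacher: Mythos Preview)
Your proof is correct and is essentially identical to the paper's own argument: the paper also decomposes $l$ into phase increments $t_i$ (your $l_i$), observes that each is geometric with success probability $p_i=\frac{h_\epsilon-(i-1)}{h_\epsilon}$, and applies linearity of expectation to obtain $h_\epsilon H_{h_\epsilon}$. The only addition you make is the explicit harmonic-number asymptotic justifying the $\Theta(h_\epsilon\log h_\epsilon)$ claim, which the paper leaves implicit.
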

\begin{proof}
 Denote $t_i$ as the number of required requests to collect the $i^{th}$ content after  $(i-1)^{th}$ content have been 
 collected. Notice that the probability of collecting a new content given that $i-1$ contents have been collected is 
 equal to $p_i = \frac{{h_{\epsilon}}-(i-1)}{{h_{\epsilon}}}$. Therefore, $t_i$ has
 geometric distribution with expected value of  
 $\frac{1}{p_i}=\frac{{h_{\epsilon}}}{{h_{\epsilon}}-(i-1)}$.  By the linearity of expectation we have:
 \begin{eqnarray}
  \mathbb{E}[l]=\sum_{i=1}^{{h_{\epsilon}}} \mathbb{E}[t_i] =  
  \sum_{i=1}^{{h_{\epsilon}}} \frac{{h_{\epsilon}}}{{h_{\epsilon}}-(i-1)} 
  = {h_{\epsilon}} \sum_{i=1}^{{h_{\epsilon}}} \frac{1}{i} = h_{\epsilon} H_{h_{\epsilon}} \nonumber
  \label{bsdhgtr}
 \end{eqnarray}
\end{proof} 
\begin{rem}{\em 
 Notice that  
 the contents in UT caches should be stored such that each UT does not 
 cache a content more than once. Therefore, this problem cannot 
 exactly be modeled by the coupon collector problem but  
 when $h_{\epsilon} >> M$, the probability of having 
 multiple instances of the same content in one UT goes 
 to zero and hence the above argument is valid and 
 $\mathbb{E}[l]= h_{\epsilon} H_{h_{\epsilon}}$. 
 }\label{rem_explain}
\end{rem}
Note that during placement phase, we cache contents from the popular set $\xi_{1-\epsilon}$ inside the UTs independently and with uniform distribution. The distribution of  placement of contents inside the UTs is different from the popularity distribution of the contents. 
\begin{thm}{\em 
 In a cellular network with femtocaching operating under a
 decentralized uncoded caching assumption, the average number of traveled hops is  
  \begin{equation}
  \mathbb{E}[x]=\mathbb{E}[x | r \in \xi_{1-\frac{1}{\sqrt{n}}}] = 
  \Theta \left(\frac{h_{\frac{1}{\sqrt{n}}} \log h_{\frac{1}{\sqrt{n}}}}{M} \right).
  \label{ex_uncoded}
 \end{equation}
 Therefore, the following capacity is achievable
    \begin{equation}
  \lambda_{\textrm{uncoded}} = \Theta(\frac{1}{ \mathbb{E}[x] \log n})= 
  \Theta \left(\frac{M}{h_{\frac{1}{\sqrt{n}}} \log h_{\frac{1}{\sqrt{n}}}  \log n} \right).
  \label{capa_uncoded}
 \end{equation}
 }
 \label{thm_uncoded}
\end{thm}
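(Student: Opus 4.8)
The plan is to reduce the claim, via Remark~\ref{dmfbfj}, to estimating $\mathbb{E}[x\mid r\in\xi_{1-1/\sqrt n}]$: equation~\eqref{ex1532} already shows that the rare requests from $\xi_{\epsilon}$ (with $\epsilon=1/\sqrt n$) contribute only a lower-order term, so it suffices to bound the conditional expectation for a request of a popular content $r$. Fix such an $r$. Under the deterministic routing of \cite{kulkarni2004deterministic}, the request is relayed from the requesting UT toward the helper through a sequence of horizontal and vertical square-lets, one relay UT per square-let, and is served by the first relay on this path whose cache already contains $r$ (it travels the full $\Theta(\sqrt{n/\log n})$ hops to the helper only if no relay on the path has it). Hence $x$ is, up to a $\Theta(1)$ factor, the index along the path of the first relay that stores $r$, and the first step is to convert this into a statement about how far one must read along the concatenation of the relays' caches.

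Next I would model the caches along the path as a single stream of contents: each relay contributes its $M$ stored contents, which by the placement rule are drawn uniformly at random from the $h_{\epsilon}$ popular contents, independently across UTs; since $h_{\epsilon}\gg M$, Remark~\ref{rem_explain} lets us also treat the $M$ contents within a UT as i.i.d.\ uniform draws, so the whole stream is, to leading order, an i.i.d.\ uniform sequence over $\xi_{1-\epsilon}$. This is exactly the coupon-collector setting of Lemma~\ref{leme0}: after the stream has produced $\mathbb{E}[l]=h_{\epsilon}H_{h_{\epsilon}}=\Theta(h_{\epsilon}\log h_{\epsilon})$ contents --- equivalently, after $\Theta(h_{\epsilon}\log h_{\epsilon}/M)$ relay UTs --- every popular content, and in particular $r$, has appeared at least once. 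Adding a concentration estimate for the coupon-collector time (so the covering event fails with probability $o(1)$, on which $x$ is still at most the $\Theta(\sqrt{n/\log n})$ hops to the helper and contributes negligibly to the mean), I obtain $\mathbb{E}[x\mid r\in\xi_{1-\epsilon}]=\Theta\!\left(h_{\epsilon}H_{h_{\epsilon}}/M\right)$, which with $\epsilon=1/\sqrt n$ is~\eqref{ex_uncoded}. The capacity statement~\eqref{capa_uncoded} then follows at once by substituting this $\mathbb{E}[x]$ into~\eqref{capa}.

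I expect the delicate point to be the matching lower bound --- arguing that a typical, popularity-weighted request really costs $\Omega(h_{\epsilon}\log h_{\epsilon}/M)$ hops rather than merely the geometric hitting time $\Theta(h_{\epsilon}/M)$. Because caching is uniform over the popular set while requests are skewed toward its head, one must argue that the requested content behaves, for routing purposes, like the \emph{last} coupon to be collected rather than an average one; the natural route is to observe that the path must be provisioned so that the scheme succeeds for the whole popular set $\xi_{1-\epsilon}$ simultaneously, so the governing quantity is the maximum over popular contents of the first-occurrence position, whose expectation is $\Theta(h_{\epsilon}\log h_{\epsilon})$ in stream units by Lemma~\ref{leme0} --- which is where Lemma~\ref{leme0} enters essentially. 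Two bookkeeping issues also need care: the analysis is only meaningful when $h_{\epsilon}H_{h_{\epsilon}}/M=O(\sqrt{n/\log n})$ (otherwise the path is too short, most popular requests reach the helper, and $\mathbb{E}[x]=\Theta(\sqrt{n/\log n})$), and the degenerate regime in which $M$ already suffices to store all popular contents, isolated in Remark~\ref{mdgfb}, must be excluded so that the bound reads $\Theta(h_{\epsilon}\log h_{\epsilon}/M)$ rather than $0$.
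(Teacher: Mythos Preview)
Your proposal follows exactly the paper's line: invoke the coupon-collector Lemma~\ref{leme0} to get $\Theta(h_{\epsilon}\log h_{\epsilon})$ cache slots for full coverage of $\xi_{1-\epsilon}$, divide by $M$ to convert slots to UTs (hence hops) along the routing path, and substitute into~\eqref{capa} for the capacity. You are in fact more careful than the paper's short argument --- the paper neither discusses concentration nor the lower-bound subtlety you flag (that the geometric hitting time for a \emph{fixed} $r$ is only $\Theta(h_{\epsilon}/M)$, so the extra $\log h_{\epsilon}$ must come from provisioning the path for \emph{all} popular contents simultaneously), nor the edge-case regimes you isolate.
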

\begin{proof}
 Lemma \ref{leme0} shows that the average number of cache places needed so that all of the requests can 
 be satisfied is $\Theta(h_{\frac{1}{\sqrt{n}}} \log h_{\frac{1}{\sqrt{n}}})$. Since each UT has a cache size 
 $M$, it is obvious that the average number of users needed such that at least one copy of each content is available 
 in their caches is $\Theta(\frac{h_{\frac{1}{\sqrt{n}}} \log h_{\frac{1}{\sqrt{n}}}}{M})$. Hence, along the routing 
 path to the helper, the average number of required hops needed so that the UT can reach its desired content is 
 $\Theta(\frac{h_{\frac{1}{\sqrt{n}}} \log h_{\frac{1}{\sqrt{n}}}}{M})$. This proves the theorem. Equation 
 \eqref{capa} can be used to compute the capacity by replacing $\mathbb{E}[x]$ with the above result.  
\end{proof}
We can use equation \eqref{jdfdh} to simplify the results of theorem 
\ref{thm_uncoded}
to the case of Zipfian content request distribution. 
\begin{corol}{\em
 In a cellular femtocaching network with Zipfian content request distribution with parameter $s>1$ and assuming
 $\alpha > \frac{1}{2(s-1)}$, the following capacity result is achievable.
 \begin{equation}
  \lambda_{\textrm{uncoded}}^{\textrm{Zipf}} = 
  \Theta \left(n^{\beta- \frac{1}{s} (\alpha + \frac{1}{2}) } \frac{1}{(\log n)^2}\right)
  \label{capa_uncoded_zipfian}
 \end{equation}
 \label{mikloplo}
}\end{corol}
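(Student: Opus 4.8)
The plan is to obtain Corollary \ref{mikloplo} by a direct substitution of the Zipfian-specific estimate of $h_{\frac{1}{\sqrt{n}}}$ into the general capacity expression \eqref{capa_uncoded} of Theorem \ref{thm_uncoded}. First I would recall that Theorem \ref{thm_uncoded} gives $\lambda_{\textrm{uncoded}} = \Theta\!\left(\frac{M}{h_{\frac{1}{\sqrt{n}}}\, \log h_{\frac{1}{\sqrt{n}}}\, \log n}\right)$ with no assumption on the popularity law, and that under the Zipfian model with parameter $s>1$ the estimate \eqref{jdfdh} from Remark \ref{derreol1} yields $h_{\frac{1}{\sqrt{n}}} = \Theta\!\left(n^{\frac{1}{s}(\alpha + \frac{1}{2})}\right)$, which is valid precisely when $\alpha > \frac{1}{2(s-1)}$ so that the implicit requirement $h_{\frac{1}{\sqrt{n}}} = O(m) = O(C_3 n^{\alpha})$ holds.

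Next I would handle the logarithmic factor. Since $h_{\frac{1}{\sqrt{n}}}$ is polynomial in $n$ with exponent $\frac{1}{s}(\alpha+\frac{1}{2})$, taking logarithms gives $\log h_{\frac{1}{\sqrt{n}}} = \frac{1}{s}\left(\alpha+\frac{1}{2}\right)\log n + \Theta(1) = \Theta(\log n)$. Substituting this together with $M = C_4 n^{\beta}$, the denominator in \eqref{capa_uncoded} becomes $\Theta\!\left(n^{\frac{1}{s}(\alpha+\frac{1}{2})} (\log n)^2\right)$, and dividing $M$ by this quantity produces exactly the claimed form \eqref{capa_uncoded_zipfian}.

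The only point requiring any care is the regime of validity: the substitution of \eqref{jdfdh} is legitimate only under the stated hypothesis $\alpha > \frac{1}{2(s-1)}$, and one should also keep Remark \ref{mdgfb} in mind, which observes that when $\beta > \frac{1}{s}(\alpha+\frac{1}{2})$ the entire popular set fits in each UT cache and the per-node capacity is trivially $\Theta(1)$; thus the content of \eqref{capa_uncoded_zipfian} is the interesting statement in the complementary range $\beta \le \frac{1}{s}(\alpha+\frac{1}{2})$, where the exponent of $n$ is non-positive and the bound reflects a genuine reduction in hop count. Beyond this bookkeeping, the argument is a one-line substitution into Theorem \ref{thm_uncoded}, so I anticipate no real obstacle.
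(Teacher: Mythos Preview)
Your proposal is correct and follows exactly the paper's approach: the paper simply remarks that one uses equation \eqref{jdfdh} to specialize Theorem \ref{thm_uncoded} to the Zipfian case, and your substitution of $h_{\frac{1}{\sqrt{n}}} = \Theta\!\left(n^{\frac{1}{s}(\alpha+\frac{1}{2})}\right)$ and $M = C_4 n^{\beta}$ into \eqref{capa_uncoded}, together with $\log h_{\frac{1}{\sqrt{n}}} = \Theta(\log n)$, is precisely that computation.
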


\vspace{-0.4in}
\section{Decentralized Coded Caching}
\label{coded_sec}
In this section we will find capacity results assuming that the UTs are caching coded contents from the 
set of popular contents in $\xi_{1-\epsilon}$ independently of 
other UTs. We propose a random coding strategy and we will prove that if UTs follow this random coding 
strategy, the capacity will be increased by a factor of $\log h_{\epsilon}$. The result proves that not only coded 
caching is more efficient in small storage scenarios \cite{lee2015index}, but also it increases the capacity. We first describe  the random coding cache placement and the decoding procedure. 

{\bf Coded cache placement:} For the purposes of this paper, we assume that random coding is done over
Galois Field GF(2). For each encoded file, 
the helper node (or base station) randomly selects each one of contents from the set $\xi_{1-\epsilon}$
with probability $\frac{1}{2}$ and then add all the selected contents  to create one encoded file. For a UT with 
cache size $M$, the helper node creates $M$ of these encoded files. Therefore, each one of the contents in $\xi_{1-\epsilon}$ has 
been used on average $\frac{M}{2}$ times to create the $M$ coded files. 

{\bf Coded file reconstruction:} When a UT requests a content, if the content is among the set of popular contents 
$\xi_{1-\epsilon}$, it sends the request to the helper. The helper then decides to send the file through a routing
path as proposed in \cite{kulkarni2004deterministic}. However, it is highly possible that the content can be reconstructed 
using the coded contents in the caches of UTs between the requesting UT and the helper along the routing path. If that 
is the case, then the helper sends appropriate coding information to the relaying UTs along the routing path  and each
relay UT that has useful information, add that information to the file that is being relayed to the requesting UT. This 
procedure continues hop by hop until the content reaches the requesting UT.  After the requesting UT receives this file, 
 it can reconstruct the desired  content by applying its own coding gains to the received coded file. By doing so, there will
 not be multiple transmissions by relaying UTs to construct the requested content. 

%
%
To prove our results 
we will first prove the following lemma. 
\begin{lem}
 {\em 
 If for a vector $v_i \in \mathbb{F}_2^{{h_{\epsilon}}}$, every element is equal to 1 with probability
 $\frac{1}{2}$ and equal to 0 with 
 probability $\frac{1}{2}$ and  $\{ v_1,v_2,\dots,v_q \}$ span the vector  space of 
 $\mathbb{F}_2^{h_{\epsilon}}$, then 
 the average required number of such vectors to span the set $\mathbb{F}_2^{{h_{\epsilon}}}$ equals  
$  \mathbb{E}[q] = {h_{\epsilon}} + \sum_{i=1}^{h_{\epsilon}}
  \frac{1}{2^{i}-1}  =   {h_{\epsilon}} + \gamma $ where 
 $\gamma  \approx 1.6067$ is called the Erdős–Borwein constant.
 }
 \label{leme1}
\end{lem}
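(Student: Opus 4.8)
The plan is to treat this as a rank-growth process, entirely parallel to the coupon-collector argument of Lemma~\ref{leme0}. Draw the vectors $v_1,v_2,\dots$ one at a time and let $V_t = \mathrm{span}\{v_1,\dots,v_t\}$ be the subspace spanned after $t$ draws; then $q$ is the first time $t$ at which $\dim V_t = h_{\epsilon}$, i.e. at which $V_t = \mathbb{F}_2^{h_{\epsilon}}$. Since $\dim V_t$ is non-decreasing and grows by at most one per draw, it suffices to control, for each $j \in \{0,1,\dots,h_{\epsilon}-1\}$, the number $T_j$ of draws needed to pass from a $j$-dimensional span to a $(j+1)$-dimensional span, and then sum.

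First I would establish the one-step probability. Conditioned on the current span being some fixed $j$-dimensional subspace $U \subseteq \mathbb{F}_2^{h_{\epsilon}}$, the next vector $v_{t+1}$ is uniform on $\mathbb{F}_2^{h_{\epsilon}}$ and independent of the past; since $U$ contains exactly $2^j$ of the $2^{h_{\epsilon}}$ vectors, $\mathrm{P}[v_{t+1}\in U] = 2^{\,j-h_{\epsilon}}$. The key point is that this value depends only on $j$ and not on which particular $j$-dimensional subspace has been accumulated, so the rank increases on this step with probability $1 - 2^{\,j-h_{\epsilon}}$ regardless of history. Hence $T_j$ is geometric with success parameter $p_j = 1 - 2^{\,j-h_{\epsilon}}$, giving $\mathbb{E}[T_j] = \frac{1}{1-2^{\,j-h_{\epsilon}}} = \frac{2^{\,h_{\epsilon}-j}}{2^{\,h_{\epsilon}-j}-1}$.

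Then, exactly as in Lemma~\ref{leme0}, linearity of expectation yields $\mathbb{E}[q] = \sum_{j=0}^{h_{\epsilon}-1}\mathbb{E}[T_j] = \sum_{j=0}^{h_{\epsilon}-1} \frac{2^{\,h_{\epsilon}-j}}{2^{\,h_{\epsilon}-j}-1}$. Re-indexing with $i = h_{\epsilon}-j$ turns this into $\sum_{i=1}^{h_{\epsilon}} \frac{2^i}{2^i-1} = \sum_{i=1}^{h_{\epsilon}} \left(1 + \frac{1}{2^i-1}\right) = h_{\epsilon} + \sum_{i=1}^{h_{\epsilon}} \frac{1}{2^i-1}$, which is the claimed formula; as $h_{\epsilon}\to\infty$ the tail sum $\sum_{i\ge 1}\frac{1}{2^i-1}$ is precisely the Erdős–Borwein constant $\gamma \approx 1.6067$, so in fact $\mathbb{E}[q] = h_{\epsilon} + \gamma + o(1)$.

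The only delicate step — the one I expect to be the main obstacle to state cleanly — is the conditioning claim: one must argue carefully that a fresh uniform vector lies in the current span with probability exactly $2^{\,j-h_{\epsilon}}$ for \emph{every} $j$-dimensional span, so that the $T_j$ are genuine geometric variables and the rank evolves as a clean Markov chain; once this is granted, the rest is the same routine sum as in Lemma~\ref{leme0}. It is also worth a short remark, in the spirit of Remark~\ref{rem_explain}, on why the lemma is the relevant quantity for the coded scheme: the $M$ coded files stored by a UT have Bernoulli$(1/2)$ coefficient vectors in $\mathbb{F}_2^{h_{\epsilon}}$, and $\mathbb{E}[q]$ counts how many such files, pooled over the UTs along the routing path, are needed before their coefficient vectors span $\mathbb{F}_2^{h_{\epsilon}}$ and any requested popular content can be recovered.
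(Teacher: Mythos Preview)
Your argument is correct and rests on the same Markov-chain observation as the paper: the dimension of the span jumps from $j$ to $j{+}1$ with probability $1-2^{\,j-h_{\epsilon}}$, independently of which $j$-dimensional subspace has been collected, so the waiting time at each level is geometric and the expectations sum to $\sum_{i=1}^{h_{\epsilon}} \frac{2^i}{2^i-1}$.

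The only difference is in how the expected hitting time is extracted. The paper packages the chain as a discrete phase-type distribution, writes the transition matrix in block form $P=\begin{bmatrix}T & T_0\\ 0 & 1\end{bmatrix}$, and reads off $\mathbb{E}[q]$ from the first row sum of $(I-T)^{-1}$; you instead decompose $q=\sum_{j=0}^{h_\epsilon-1}T_j$ and apply linearity of expectation directly, exactly as in Lemma~\ref{leme0}. Your route is shorter and more elementary, while the paper's matrix formulation would generalize more readily (e.g.\ to non-uniform coefficient distributions or higher moments), but for the present statement both yield the identical closed form.
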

\vspace{-0.2cm}
\begin{proof} 
  We can form a Markov chain to model the problem. 
 The states of this Markov chain are equal to the dimension of the space spanned by vectors $v_1,v_2,\dots,v_l$. 
 Let $k_l$ ($k_l \leq h_{\epsilon}$) denote the dimension of the space spanned by vectors $v_1,v_2,\dots,v_l$.
 Therefore, the Markov chain will have $k_l+1$ distinct states. Assuming that we are in state $k_l$, 
 we want to find the 
 probability that adding a new vector will change the state to $k_l+1$. When we are in state $k_l$, adding 
 $2^{k_l}$ vectors out of the total $2^{h_{\epsilon}}$ possible vectors will not change the dimension while adding any one of 
 $2^{h_{\epsilon}} - 2^{k_l}$ new vectors will change the 
 dimension to $k_l+1$. Therefore, the Markov chain can be represented as the one in Figure \ref{markovchain}. 
 \begin{figure}[H]
\begin{center}
\begin{tikzpicture}[->,>=stealth',auto,semithick,node distance=1.8cm]
\tikzstyle{every state}=[fill=white,draw=black,thick,text=black,scale=1]
\node[state]    (k0)                {};
\node[state]    (k1)[right of=k0]   {};
\node[state]    (k2)[right of=k1]   {};
\node[state]    (k3)[right of=k2]   {};
\node[state]    (kn)[right of=k3]   {};
\node [label={[label distance=0.5cm]$k_l=0$}] (t0)[below of=k0]{};
\node [label={[label distance=0.5cm]$k_l=1$}]   (t1)[below of=k1]   {};
\node [label={[label distance=0.5cm]$k_l=2$}]   (t2)[below of=k2]   {};
\node [label={[label distance=0.5cm]$k_l=3$}]   (t3)[below of=k3]   {};
\node [label={[label distance=0.5cm]$k_l={h_{\epsilon}}$}]   (tn)[below of=kn]   {};
\path (k0)    edge[loop above]    node{$\frac{1}{2^{h_{\epsilon}}}$}      (k0);
\path (k0)    edge[above]    node{$1-\frac{1}{2^{h_{\epsilon}}}$}    (k1);
\path (k1)    edge[loop above]    node{$\frac{2}{2^{h_{\epsilon}}}$}      (k1);
\path (k1)    edge[above]    node{$1-\frac{2}{2^{h_{\epsilon}}}$}    (k2);
\path (k2)    edge[loop above]    node{$\frac{2^2}{2^{h_{\epsilon}}}$}    (k2);
\path (k2)    edge[above]    node{$1-\frac{2^2}{2^{h_{\epsilon}}}$}  (k3);
\path (k3)    edge[loop above]    node{$\frac{2^3}{2^{h_{\epsilon}}}$}    (k3);
\path (k3)    edge[dashed]        node{}                     (kn);
\path (kn)    edge[loop above]    node{$1$}                  (kn);
\end{tikzpicture}
\end{center}
\vspace{-1cm}
\caption{The state space of the Markov chain used in proof of lemma \ref{leme1}.}
\label{markovchain}
\end{figure}
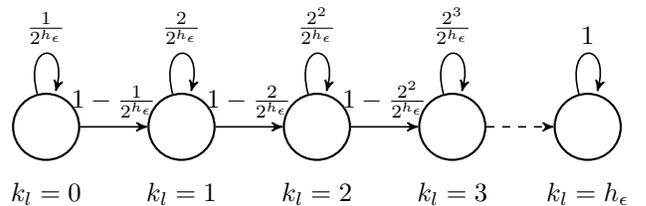 
 The state transition matrix for this Markov chain 
can be written in the form of a discrete phase-type distribution as
\begin{equation} 
P=\begin{bmatrix} 
T & T_0  \\ 
0 & 1 \\ \end{bmatrix},
\label{msdbf}
\end{equation}
where
\begin{equation} 
T=\begin{bmatrix} 
\frac{1}{2^{h_{\epsilon}}} & 1- \frac{1}{2^{h_{\epsilon}}} & 0                & \cdots & 0         &0               \\ 
0             & \frac{2}{2^{h_{\epsilon}}}    & 1- \frac{2}{2^{h_{\epsilon}}} & \cdots & 0         &0                \\ 
0             &   0              & \frac{4}{2^{h_{\epsilon}}}    & \cdots & 0         &0                 \\ 
\vdots        & \vdots           & \vdots           & \ddots & \vdots    &\vdots   \\
0             & 0                & 0                & \cdots           & \frac{2^{{h_{\epsilon}}-2}}{2^{h_{\epsilon}}} & 1- \frac{2^{{h_{\epsilon}}-2}}{2^{h_{\epsilon}}} \\ 
0             & 0                & 0                & \cdots           & 0         &\frac{2^{{h_{\epsilon}}-1}}{2^{h_{\epsilon}}}  \\
\end{bmatrix},
\end{equation}
\begin{equation}
T_0^t= [ 0 \hspace{0.1in} 0 \ldots 1-\frac{2^{{h_{\epsilon}}-1}}{2^{h_{\epsilon}}}],
\end{equation}
and $t$ denotes transpose operation.
If $e$ denotes  all one vector of size ${h_{\epsilon}}$, 
since $P$ is a probability distribution we  have 
$P\begin{bmatrix}e \\ 1 \end{bmatrix}=\begin{bmatrix}e \\ 1 \end{bmatrix}.
$ This 
implies that $Te + T_0 = e$, hence $ T_0 =  (I - T)e$. Therefore, it is easy to show by induction that the state transition matrix in $l$ steps can 
be written as
\begin{equation} 
P^l=\begin{bmatrix} 
T^l & (I-T^l)e  \\ 
0 & 1 \\ \end{bmatrix}.
\label{mskit}
\end{equation}
This equation implies that if we define the absorption  time as
\begin{equation}
q = \inf \{l \ge 1 ~|~ k_l={h_{\epsilon}}\},
\end{equation}
and if $l$ is strictly less than the absorption time, the probability of transitioning from state $i$ to state $j$ by having $l$ new vectors 
can be found from the submatrix $T^l$ of $P^l$. In other words, 
\begin{equation}
 \mathrm{P}_i^l[k_l = j,  l < q] = (T^l)_{ij}.
 \label{probres}
\end{equation}
Therefore, starting from state $i$, if $t_j^i$ denotes the time spent in state $j$ before absorption, 
$t_j^i$ can be written as 
\begin{equation}
t_j^i = \sum_{l=0}^{q-1} \mathrm{1}\{k_l=j\}
\label{msdhlsdjfusdg} 
\end{equation}
Therefore, starting from state $i$,  the average time spent in state $j$ will be equal to 
\begin{eqnarray}
 \mathbb{E}[t_j^i] = \mathbb{E} \left[\sum_{l=0}^{q-1} \mathrm{1}\{k_l=j\} \right] = \sum_{l=0}^{q-1} \mathbb{E} \left[ \mathrm{1}\{k_l=j\} \right]. \nonumber 
 \label{expected1}
\end{eqnarray}
Since $\mathbb{E} \left[ \mathrm{1}\{k_l=j\} \right] = \mathrm{P}_i^l(k_l = j, l \le q-1) $, we have 
\begin{equation}
\mathbb{E}[t_j^i] = \sum_{l=0}^{q-1} \mathrm{P}_i^l(k_l = j, l \le q-1)  
 = \sum_{l=0}^{\infty} \mathrm{P}_i^l(k_l = j, l \le q-1)  \nonumber
\end{equation}
\begin{equation}
 = \sum_{l=0}^{\infty} \mathrm{P}_i^l(k_l = j, l < q)  = \sum_{l=0}^{\infty} (T^l)_{ij}.
 \label{expected2}
\end{equation}
Since the probability is nonzero up to $q-1$, then we can extend the summation to infinity adding zero terms in  \eqref{expected2}. 
Notice that the equality in the last line comes from equation \eqref{probres}.
If we denote matrix $U=(\mathbb{E}[t_j^i])_{ij}$, using equation \eqref{expected1} and using matrix algebra, we have 
\begin{equation}
U = \sum_{i=0}^{\infty}T^i =(I-T)^{-1}.
\end{equation}
It is not difficult to verify that 
\begin{equation}
\nonumber \\
 U=(I-T)^{-1}=
 \begin{bmatrix}
 \frac{2^{h_{\epsilon}}}{2^{h_{\epsilon}}-1} & \frac{2^{{h_{\epsilon}}-1}}{2^{{h_{\epsilon}}-1}-1} &  \frac{2^{{h_{\epsilon}}-2}}{2^{{h_{\epsilon}}-2}-1}& \cdots &2 \\ 
 0 & \frac{2^{{h_{\epsilon}}-1}}{2^{{h_{\epsilon}}-1}-1} & \frac{2^{{h_{\epsilon}}-2}}{2^{{h_{\epsilon}}-2}-1}& \cdots&2\\ 
 0 & 0 & \frac{2^{{h_{\epsilon}}-2}}{2^{{h_{\epsilon}}-2}-1}&  \cdots&2\\ 
 \vdots & \vdots & \vdots & \ddots & \vdots \\ 
 0  & 0 & 0 & \cdots &  2
\end{bmatrix}
\end{equation}
Therefore, starting at $k_l=0$, the average time it takes to get to absorption is equal to 
\begin{eqnarray} 
\mathbb{E}[q] &=& 
\begin{bmatrix}1 & 0 & \cdots & 0\end{bmatrix} U e \nonumber \\
&=& \begin{bmatrix}1 & 0 & \cdots & 0\end{bmatrix} (I-T)^{-1} e \nonumber \\
&=& \sum_{i=1}^{h_{\epsilon}} \frac{2^i}{2^i-1} ={h_{\epsilon}} + \sum_{i=1}^{h_{\epsilon}} \frac{1}{2^i-1} 
\end{eqnarray} 
This proves the lemma. 
 \label{prof1}
\end{proof}
This lemma shows that each UT's request can be satisfied in a smaller number of hops compared to an uncoded caching
strategy. Therefore, the capacity will be increased. The following theorem formalizes this. 
\begin{thm}{\em 
 In a cellular network with femtocaching, 
 our proposed decentralized coded caching in which each popular content in $\xi_{1-\epsilon}$ is 
 present in any cache location with probability $\frac{1}{2}$ reduces the required number of 
 traveled hops for each request by UTs to at most 
  \begin{equation}
  \mathbb{E}[x]=\mathbb{E}[x | r \in \xi_{1-\frac{1}{\sqrt{n}}}] = 
  \Theta \left( \frac{h_{\frac{1}{\sqrt{n}}}}{M}\right).
  \label{ex_coded}
 \end{equation}
 Therefore, the following capacity is achievable through decentralized coded caching.
    \begin{equation}
  \lambda_{\textrm{coded}} = \Theta(\frac{1}{ \mathbb{E}[x] \log n})= 
  \Theta \left(\frac{M}{h_{\frac{1}{\sqrt{n}}}  \log n}\right)
  \label{capa_coded}
 \end{equation}
 }
 \label{thm_coded}
\end{thm}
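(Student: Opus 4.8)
The plan is to mirror the proof of Theorem~\ref{thm_uncoded}, substituting the coupon-collector count of Lemma~\ref{leme0} by the rank-growth count of Lemma~\ref{leme1}. First I would fix the decoding criterion. Identify each coded file stored in a UT with its membership vector in $\mathbb{F}_2^{h_\epsilon}$ (coordinate $j$ equals $1$ iff content $j$ was XOR-ed into that file); by the coded cache placement these vectors have i.i.d.\ $\mathrm{Bernoulli}(1/2)$ entries and are mutually independent across all cache slots of all UTs, since placement is decentralized. Because the helper performed the placement it knows every membership vector, so a request for a popular content $r$ made by UT~$i$ can be served within $x$ hops exactly when the standard basis vector $e_r$ lies in the $\mathbb{F}_2$-span of the membership vectors held by the UTs within $x$ hops of $i$ along the deterministic route of \cite{kulkarni2004deterministic} (including the $M$ vectors in $i$'s own cache): the helper then tells each relay the precise $\mathbb{F}_2$-combination to add, so that the single file reaching $i$, together with $i$'s cache, reconstructs $r$ without any extra transmissions.

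Second, I would bound the number of membership vectors that must be accumulated. Since these vectors are i.i.d.\ uniform on $\mathbb{F}_2^{h_\epsilon}$, the expected number needed for $e_r$ to enter their span is at most the expected number needed for the whole space $\mathbb{F}_2^{h_\epsilon}$ to be spanned, which by Lemma~\ref{leme1} equals $h_\epsilon+\gamma=\Theta(h_\epsilon)$, uniformly in $r$. Each UT on the route supplies exactly $M$ such vectors, so in expectation the content becomes reconstructible after $\Theta(h_\epsilon/M)$ hops, i.e.\ $\mathbb{E}[x\mid r\in\xi_{1-\epsilon}]=O(h_\epsilon/M)$; the matching lower bound holds because a fixed nonzero vector lies in the span of $k$ i.i.d.\ uniform vectors only with probability $O(2^{k-h_\epsilon})$, so $\Theta(h_\epsilon)$ vectors are genuinely required. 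As in Remark~\ref{rem_explain}, one should note that the $M$ vectors inside a single UT are drawn independently and that, when $h_\epsilon\gg M$, the chance of a repeated membership vector within one UT vanishes, so treating the $\Theta(h_\epsilon)$ accumulated vectors as i.i.d.\ is legitimate.

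Third, I would set $\epsilon=\tfrac1{\sqrt n}$ and invoke Remark~\ref{dmfbfj} to pass to $\mathbb{E}[x]=\Theta\!\left(\mathbb{E}[x\mid r\in\xi_{1-1/\sqrt n}]\right)=\Theta(h_{1/\sqrt n}/M)$, which is \eqref{ex_coded}; plugging this into \eqref{capa} gives $\lambda_{\textrm{coded}}=\Theta(1/(\mathbb{E}[x]\log n))=\Theta(M/(h_{1/\sqrt n}\log n))$, which is \eqref{capa_coded}. Comparing with Theorem~\ref{thm_uncoded} exhibits the promised $\log h_{1/\sqrt n}$ improvement.

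The step I expect to be the main obstacle is the first one: arguing rigorously that the hop at which the content becomes available is governed exactly by the span condition, and that the relay protocol realizes the required $\mathbb{F}_2$-combination with a single forwarded file and no duplicated transmissions; and, relatedly, restricting to the regime where $\Theta(h_{1/\sqrt n}/M)$ stays below the full route length $\Theta(\sqrt{n/\log n})$ (outside the degenerate case of Remark~\ref{mdgfb}, where $\mathbb{E}[x]=0$), since otherwise the request simply reaches the helper and both strategies collapse to the same bound. The rank counting itself is routine once Lemma~\ref{leme1} is available, and the pointwise domination ``hitting time for $e_r\in\mathrm{span}$ $\le$ hitting time for full span'' lets us avoid analyzing the $e_r$-hitting time directly.
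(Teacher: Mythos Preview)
Your proposal is correct and follows essentially the same approach as the paper: invoke Lemma~\ref{leme1} to conclude that $\Theta(h_{1/\sqrt{n}})$ coded files suffice (in expectation) to recover any requested content, divide by the per-UT cache size $M$ to obtain the hop count, and substitute into \eqref{capa}. The paper's own proof is only a few sentences long and omits the points you flag as obstacles---the precise span/decoding criterion, the single-file relay mechanism, the matching lower bound, and the regime restriction $\Theta(h_{1/\sqrt{n}}/M)=O(\sqrt{n/\log n})$---so your write-up is in fact more careful than the original while using the identical skeleton.
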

\begin{proof}
Lemma \ref{leme1} shows that to be able to decode a requested content, on average $\Theta( h_{\frac{1}{\sqrt{n}}})$
coded contents are required. Since each UT has a cache of size $M$, we need $\Theta( \frac{h_{\frac{1}{\sqrt{n}}}}{M})$  UTs to be able to 
decode the desired content. This means that along the routing path, we only need to travel 
 a distance of $\Theta (\frac{ h_{\frac{1}{\sqrt{n}}}}{M})$ hops away from each UT to find 
all the contents that the UT requires for decoding its desired content. 
Notice that  individual UTs do not need to separately send their coded content to the requesting node. Each UT can combine the appropriate encoded files to the received file along the  
route  to the requesting node. 
\end{proof}
Similarly, the results of theorem \ref{thm_coded} can be simplified by using 
equation \eqref{jdfdh} for the case of Zipfian content request distribution. 
\begin{corol}{\em
 In a cellular femtocaching network with Zipfian content request distribution with parameter $s>1$ and assuming
 $\alpha > \frac{1}{2(s-1)}$, the following capacity result 
 is achievable through decentralized coded caching.
 \begin{equation}
  \lambda_{\textrm{coded}}^{\textrm{Zipf}} = \Theta \left(n^{\beta- \frac{1}{s} (\alpha + \frac{1}{2}  )} 
  \frac{1}{\log n}\right)
  \label{capa_coded_zipfian}
 \end{equation}
 \label{miklo}
 \vspace{-0.4cm}
}\end{corol}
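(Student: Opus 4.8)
The plan is to specialize Theorem~\ref{thm_coded} to the Zipfian case by substituting the explicit value of $h_{\frac{1}{\sqrt{n}}}$ obtained in Remark~\ref{derreol1}. Theorem~\ref{thm_coded} already provides the general expression $\lambda_{\textrm{coded}} = \Theta\!\left(\frac{M}{h_{\frac{1}{\sqrt{n}}}\log n}\right)$, so the only remaining work is to plug in closed forms for $M$ and $h_{\frac{1}{\sqrt{n}}}$ and simplify the exponent.

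First I would recall from the network model that $M = C_4 n^{\beta}$ with $C_4$ a fixed constant independent of $n$, so $M = \Theta(n^{\beta})$. Next I would invoke equation~\eqref{jdfdh}, which states that for a Zipfian popularity law with parameter $s$ a threshold index satisfying $\textrm{P}[r\in\xi_{\frac{1}{\sqrt{n}}}]\le \frac{1}{\sqrt{n}}$ can be taken to be $h_{\frac{1}{\sqrt{n}}} = \Theta\!\left(n^{\frac{1}{s}(\alpha+\frac{1}{2})}\right)$. Here I would be careful to check the hypotheses under which \eqref{jdfdh} was derived: the argument in Remark~\ref{derreol1} uses $s>1$ so that the generalized harmonic number $H_{m,s}$ converges to the finite constant $\zeta(s)$, and it implicitly requires $h_{\frac{1}{\sqrt{n}}} = O(m) = O(C_3 n^{\alpha})$, which holds precisely when $\alpha > \frac{1}{2(s-1)}$ --- exactly the standing assumption of the corollary.

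Substituting both expressions into the bound of Theorem~\ref{thm_coded} gives
\begin{equation}
 \lambda_{\textrm{coded}}^{\textrm{Zipf}} = \Theta\!\left(\frac{n^{\beta}}{n^{\frac{1}{s}(\alpha+\frac{1}{2})}\log n}\right) = \Theta\!\left(n^{\beta-\frac{1}{s}(\alpha+\frac{1}{2})}\frac{1}{\log n}\right),
\end{equation}
which is the claimed rate. I would close by observing that this expression is only meaningful (bounded by the trivial per-node capacity $\Theta(1)$) when $\beta \le \frac{1}{s}(\alpha+\frac{1}{2})$; in the complementary regime $\beta > \frac{1}{s}(\alpha+\frac{1}{2})$ every UT can cache all popular contents and, by Remark~\ref{mdgfb}, the optimal $\Theta(1)$ per-node capacity is achieved anyway, so the statement is consistent with that boundary case.

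Since the whole argument is a direct substitution into an already-established theorem, I do not expect a serious obstacle. The only point requiring genuine care is confirming that the regime condition $\alpha > \frac{1}{2(s-1)}$ is precisely what legitimises the use of \eqref{jdfdh} --- i.e.\ that the $\Theta$-estimate for $h_{\frac{1}{\sqrt{n}}}$ is an actual count rather than an over-estimate exceeding $m$ --- and checking that the constants $C_3,C_4$ are absorbed cleanly into the $\Theta(\cdot)$ notation.
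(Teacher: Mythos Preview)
Your proposal is correct and follows exactly the paper's approach: the paper simply states that the corollary is obtained by substituting equation~\eqref{jdfdh} into Theorem~\ref{thm_coded}, which is precisely what you do (with, if anything, more care in verifying the regime hypotheses $s>1$ and $\alpha>\tfrac{1}{2(s-1)}$).
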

{ Our proposed coded caching strategy can be done with 
insignificant overhead as the coding instructions sent from the helper 
is negligible compared to the size of the files. The computational 
complexity of in each UT (XOR operation) is also not significant. However, the 
helper requires to have high computational complexity capability. Future works should
concentrate on the ways to reduce the complexity and delay for this 
approach.}
\vspace{-0.5cm}
\section{Simulations}
\label{sim_sec}
The simulation results  compare the performance of our proposed  decentralized
random coded caching with decentralized random uncoded caching. We assume a helper which is serving $n=2500$ UTs. The
Zipfian content request probability  parameter is $s=2.5$, $\alpha = 1.5$, and $C_3 =8$ which means that a 
total of $m=1000000$ contents exist in the network and 523 popular contents are considered for this simulation.   The cache size parameter 
$\beta$ is ranging from $0.3$ to $0.8$ while $C_4=1$. Figure \ref{fig_sim22} shows the simulation results comparing the average number of hops 
required to decode the content in both decentralized coded and uncoded caching. As can be seen from this figure, our proposed 
decentralized random coded cache placement algorithm can significantly reduce the average number of traveled hops compared to 
decentralized uncoded cache placement. 
Further, the theoretical results match the simulation results for both cases. 

\begin{figure}
    \center
      \includegraphics[scale=0.4,angle=0]{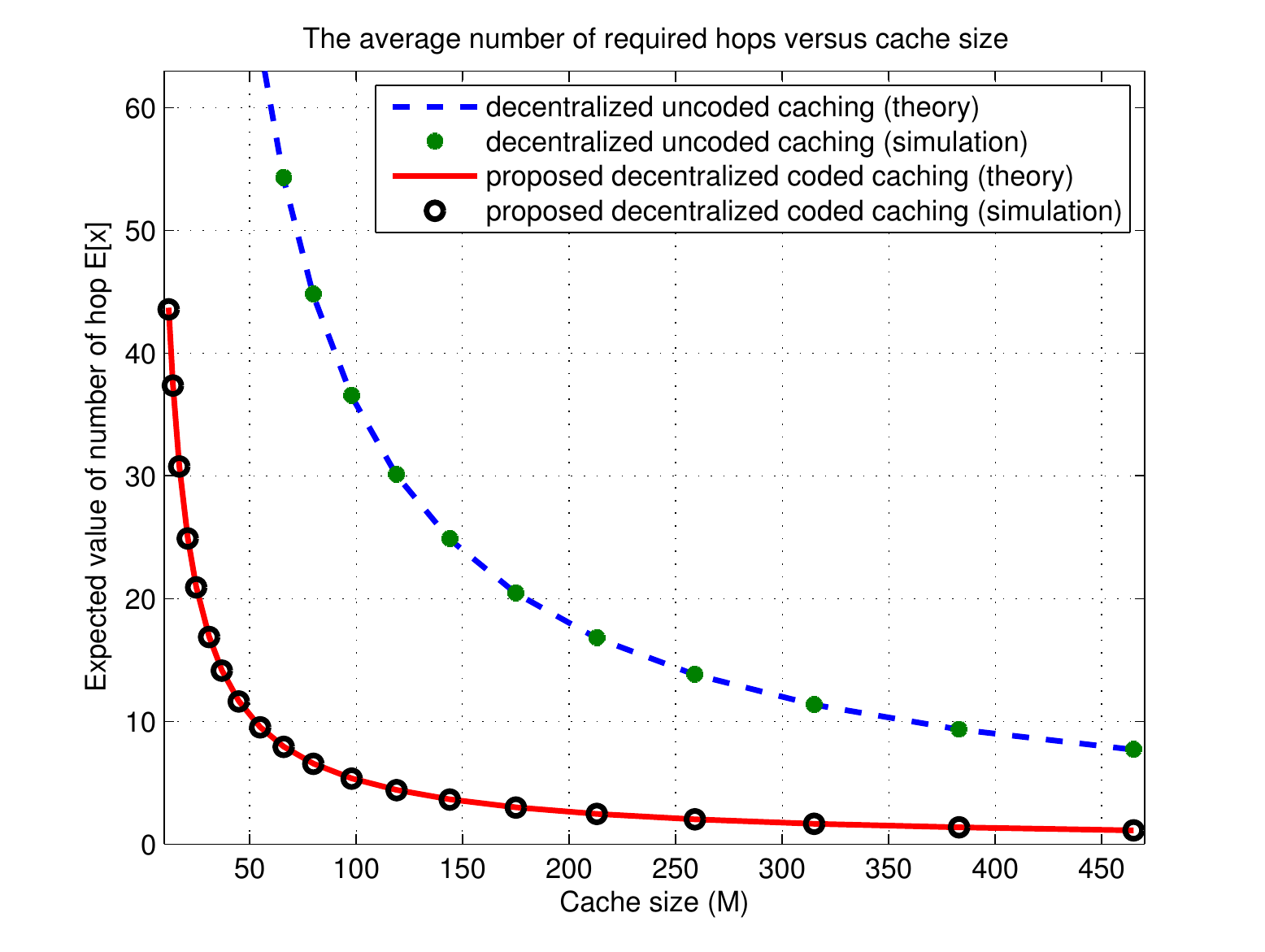}
\vspace{-0.1in}
\caption{Simulation results show that the expected number of hops in case of decentralized coded caching 
is significantly lower than the expected number of hops for decentralized uncoded caching.}
\label{fig_sim22}
\end{figure}
\vspace{-0.3cm}
\section{Conclusions}
\label{conc_sec}
In this paper, we studied the content delivery capacity in cellular networks with femtocaching  
with decentralized uncoded and coded caching for UTs. We computed the capacity of random
decentralized uncoded caching. We then proposed a random 
coded caching strategy for network users and  proved that this random coded caching technique can improve the capacity. 
Note that we did not consider the possibility of congestion near helper node since all contents are moving toward that node. In the future work, we intend to study the effects of congestion on the capacity of the network. 
\bibliographystyle{plain}
\bibliography{All-Papers-infocom}

\begin{thebibliography}{10}

\bibitem{ieee80211ad}
{Amendments in {IEEE} 802.11ad™ enable multi-gigabit data throughput and
  groundbreaking improvements in capacity}.
\newblock \url{https://standards.ieee.org/news/2013/802.11ad.html}, 2013.
\newblock [Online; accessed 9-October-2015].

\bibitem{boccardi2014five}
Federico Boccardi, Robert~W Heath, Aurelie Lozano, Thomas~L Marzetta, and Petar
  Popovski.
\newblock Five disruptive technology directions for 5{G}.
\newblock {\em Communications Magazine, IEEE}, 52(2):74--80, 2014.

\bibitem{breslau1998implications}
Lee Breslau, Pei Cao, Li~Fan, Graham Phillips, and Scott Shenker.
\newblock On the implications of {Z}ipf’s law for web caching.
\newblock Technical report, Citeseer, 1998.

\bibitem{breslau1999web}
Lee Breslau, Pei Cao, Li~Fan, Graham Phillips, and Scott Shenker.
\newblock Web caching and {Z}ipf-like distributions: {E}vidence and
  implications.
\newblock In {\em INFOCOM'99. Eighteenth Annual Joint Conference of the IEEE
  Computer and Communications Societies. Proceedings. IEEE}, volume~1, pages
  126--134. IEEE, 1999.

\bibitem{chandrasekhar2008femtocell}
Vikram Chandrasekhar, Jeffrey~G Andrews, and Alan Gatherer.
\newblock Femtocell networks: a survey.
\newblock {\em Communications Magazine, IEEE}, 46(9):59--67, 2008.

\bibitem{chen2014fundamental}
Zhi Chen.
\newblock Fundamental limits of caching: Improved bounds for small buffer
  users.
\newblock {\em arXiv preprint arXiv:1407.1935}, 2014.

\bibitem{golrezaei2013femtocaching}
Negin Golrezaei, Andreas~F Molisch, Alexandros~G Dimakis, and Giuseppe Caire.
\newblock Femtocaching and device-to-device collaboration: A new architecture
  for wireless video distribution.
\newblock {\em Communications Magazine, IEEE}, 51(4):142--149, 2013.

\bibitem{golrezaei2012femtocaching}
Negin Golrezaei, Karthikeyan Shanmugam, Alexandros~G Dimakis, Andreas~F
  Molisch, and Giuseppe Caire.
\newblock Femtocaching: Wireless video content delivery through distributed
  caching helpers.
\newblock In {\em INFOCOM, 2012 Proceedings IEEE}, pages 1107--1115. IEEE,
  2012.

\bibitem{hachem2014multi}
Jad Hachem, Nikhil Karamchandani, and Suhas Diggavi.
\newblock Multi-level coded caching.
\newblock In {\em Information Theory (ISIT), 2014 IEEE International Symposium
  on}, pages 56--60. IEEE, 2014.

\bibitem{jeon2015wireless}
Sang-Woon Jeon, Song-Nam Hong, Mingyue Ji, Giuseppe Caire, and Andreas~F
  Molisch.
\newblock Wireless multihop device-to-device caching networks.
\newblock {\em arXiv preprint arXiv:1511.02574}, 2015.

\bibitem{ji2013wireless}
Mingyue Ji, Giuseppe Caire, and Andreas~F Molisch.
\newblock Wireless device-to-device caching networks: Basic principles and
  system performance.
\newblock {\em arXiv preprint arXiv:1305.5216}, 2013.

\bibitem{ji2014fundamental}
Mingyue Ji, Giuseppe Caire, and Andreas~F Molisch.
\newblock Fundamental limits of caching in wireless {D2D} networks.
\newblock {\em arXiv preprint arXiv:1405.5336}, 2014.

\bibitem{karamchandani2014hierarchical}
Nikhil Karamchandani, Urs Niesen, Mohammad~Ali Maddah-Ali, and Suhas Diggavi.
\newblock Hierarchical coded caching.
\newblock In {\em Information Theory (ISIT), 2014 IEEE International Symposium
  on}, pages 2142--2146. IEEE, 2014.

\bibitem{azimdoost2013}
Mohsen Karimzadeh~Kiskani, Bita Azimdoost, and Hamid Sadjadpour.
\newblock Effect of social groups on the capacity of wireless networks.
\newblock {\em Wireless Communications, IEEE Transactions on}, 2015.

\bibitem{kiskani2015multihop}
Mohsen~Karimzadeh Kiskani and Hamid~R. Sadjadpour.
\newblock Multihop caching-aided coded multicasting for the next generation of
  cellular networks.
\newblock {\em arXiv preprint arXiv:1412.2391}, 2015.

\bibitem{kulkarni2004deterministic}
Sanjeev~R Kulkarni and Pramod Viswanath.
\newblock A deterministic approach to throughput scaling in wireless networks.
\newblock {\em Information Theory, IEEE Transactions on}, 50(6):1041--1049,
  2004.

\bibitem{lee2015index}
Namyoon Lee, Alexandros~G Dimakis, and Robert~W Heath.
\newblock Index coding with coded side-information.
\newblock {\em Communications Letters, IEEE}, 19(3):319--322, 2015.

\bibitem{maddah2013decentralized}
Mohammad~Ali Maddah-Ali and Urs Niesen.
\newblock Decentralized coded caching attains order-optimal memory-rate
  tradeoff.
\newblock In {\em Communication, Control, and Computing (Allerton), 2013 51st
  Annual Allerton Conference on}, pages 421--427. IEEE, 2013.

\bibitem{maddah2014fundamental}
Mohammad~Ali Maddah-Ali and Urs Niesen.
\newblock Fundamental limits of caching.
\newblock {\em Information Theory, IEEE Transactions on}, 60(5):2856--2867,
  2014.

\bibitem{pedarsani2014online}
Ramtin Pedarsani, Mohammad~Ali Maddah-Ali, and Urs Niesen.
\newblock Online coded caching.
\newblock In {\em Communications (ICC), 2014 IEEE International Conference on},
  pages 1878--1883. IEEE, 2014.

\bibitem{penrose1997longest}
Mathew~D Penrose.
\newblock The longest edge of the random minimal spanning tree.
\newblock {\em The annals of applied probability}, pages 340--361, 1997.

\bibitem{rappaport2013millimeter}
Theodore~S Rappaport, Shu Sun, Rimma Mayzus, Hang Zhao, Yaniv Azar, Kangping
  Wang, George~N Wong, Jocelyn~K Schulz, Mathew Samimi, and Felix Gutierrez.
\newblock Millimeter wave mobile communications for 5{G} cellular: {I}t will
  work!
\newblock {\em Access, IEEE}, 1:335--349, 2013.

\bibitem{shanmugam2013femtocaching}
Karthikeyan Shanmugam, Negin Golrezaei, Alexandros~G Dimakis, Andreas~F
  Molisch, and Giuseppe Caire.
\newblock Femtocaching: Wireless content delivery through distributed caching
  helpers.
\newblock {\em Information Theory, IEEE Transactions on}, 59(12):8402--8413,
  2013.

\bibitem{xue2006scaling}
Feng Xue and Panganamala~R Kumar.
\newblock {\em Scaling laws for ad hoc wireless networks: an information
  theoretic approach}.
\newblock Now Publishers Inc, 2006.

\end{thebibliography}
\end{document}